\documentclass[a4paper,aps,pra,reprint,floatfix,showpacs]{revtex4-1}

\usepackage{amsfonts}
\usepackage{amssymb}
\usepackage{amsmath}
\usepackage{amstext}
\usepackage{amsthm}
\usepackage{bm}
\usepackage{microtype}
\usepackage{tikz}

\newcommand{\bmr}[1]{\bm{#1}}
\newcommand{\I}{\text{i}}
\newcommand{\E}{\text{e}}
\newcommand{\cc}[1]{{#1}^{*}}
\newcommand{\ket}[1]{|#1 \rangle}

\newcommand{\sym}[1]{| \text{S}_{#1} \rangle}
\newcommand{\psis}{| \psi^{\text{s}} \rangle}
\newcommand{\phis}{| \phi^{\text{s}} \rangle}
\newcommand\dash{\nobreakdash-\hspace{0pt}}
\newcommand{\matha}{\mathcal{A}}
\newcommand{\mathb}{\mathcal{B}}
\newcommand{\mathd}{\mathcal{D}}
\newcommand{\mathm}{\mathcal{M}}
\newcommand{\mbbc}{\mathbb{C}}
\newcommand{\cext}{\hat{\mathbb{C}}}

\newcommand{\mbbrr}{\mathbb{R}^{3}}
\newcommand{\suc}{\text{SU}(2)}
\newcommand{\slc}{\text{SL}(2,\mathbb{C})}
\newcommand{\pslc}{\text{PSL}(2,\mathbb{C})}

\providecommand{\Eq}[1]{Eq.\ \eqref{#1}}
\providecommand{\Fig}[1]{Fig.\ \ref{#1}}
\providecommand{\Sect}[1]{Section~\ref{#1}}
\providecommand{\Theo}[1]{Theorem~\ref{#1}}

\newtheorem{theorem}{Theorem}

\newtheorem{corollary}[theorem]{Corrolary}

\begin{document}

\title{Symmetric entanglement classes for $n$ qubits}

\author{Martin Aulbach}
\email{m.aulbach1@physics.ox.ac.uk}
\affiliation{Department of Physics, University of Oxford, Clarendon
  Laboratory, Oxford OX1 3PU, United Kingdom}
\affiliation{The School of Physics and Astronomy, University of Leeds,
  Leeds LS2 9JT, United Kingdom}

\begin{abstract}
  Permutation-symmetric $n$ qubit pure states can be represented by
  $n$ points on the surface of the unit sphere by means of the
  Majorana representation. Here this representation is employed to
  characterize and compare the three entanglement classification
  schemes LOCC, SLOCC and the Degeneracy Configuration.  Symmetric
  SLOCC operations are found to be described by M\"{o}bius
  transformations, and an intuitive visualization of their freedoms is
  presented.  For symmetric states of up to 5 qubits explicit forms of
  representative states for all SLOCC classes are derived. The
  symmetric 4 qubit entanglement classes are compared to the
  entanglement families introduced in [PRA \textbf{65}, 052112
  (2002)], and examples are given how the SLOCC-inequivalence of
  symmetric states can be quickly determined from known results about
  M\"{o}bius transformations.
\end{abstract}
\pacs{03.67.Mn, 03.65.Ud, 02.40.Tt, 02.40.Dr}
\maketitle

\section{Introduction}\label{introduction}

Multipartite entanglement is an essential resource in quantum
information science, and therefore it is desirable to categorize the
states of a given Hilbert space into groups of states with similar
entanglement.  The object of interest in this paper are
permutation-symmetric states. These kind of states have prominently
featured in several recent works, such as the characterization of
SLOCC entanglement classes \cite{Markham10,Bastin09,Mathonet10}, the
determination of maximal entanglement in terms of the geometric
measure \cite{Aulbach10,JMartin,Aulbach11}, uses for entanglement
witnesses or in experimental setups
\cite{Korbicz05,Korbicz06,Prevedel09,Wieczorek09}, for finding
solutions to the Lipkin-Meshkov-Glick model \cite{Ribeiro08}, and for
quantifying the ground state entanglement of the same model
\cite{Orus08}.

The central tool in all of these studies was the Majorana
representation \cite{Majorana32}, a generalization of the Bloch sphere
representation which allows symmetric $n$ qubit states to be uniquely
represented by $n$ undistinguishable points on the sphere.  Here this
paradigm is employed to discuss three different entanglement
classification schemes, namely, LOCC, SLOCC and the recently
introduced Degeneracy Configuration \cite{Bastin09}, for symmetric $n$
qubit states.  It is seen that symmetric SLOCC operations can be
described by the M\"{o}bius transformations of complex analysis, a
result that is not only of theoretical interest but also of practical
value, e.g., to determine whether two symmetric states belong to the
same SLOCC class.  Intriguingly, SLOCC operations can be uniquely
decomposed into affine M\"{o}bius transformations and LOCC operations,
thus allowing for a straightforward visualization of the innate SLOCC
freedoms.  A study of all symmetric SLOCC and DC classes for up to 5
qubits will yield the analytical form of representative states for
each class. For the 4 qubit case the results are put into relation to
the concept of entanglement families introduced in
\cite{Verstraete02}.

\section{Majorana Representation}\label{majrep}

Permutation-symmetric quantum states are defined as being invariant
under any permutation of their subsystems.  For an $n$-partite state
$\ket{\psi}$ this is the case iff $P \ket{\psi} = \ket{\psi}$ for all
$P \in S_{n}$, where $S_{n}$ is the symmetric group of $n$ elements.
For $n$ qubits the symmetric sector of the Hilbert space is spanned by
the $n+1$ Dicke states $\sym{k}$, $0 \leq k \leq n$, the equally
weighted sums of all permutations of computational basis states with
$n-k$ qubits being $\ket{0}$ and $k$ being $\ket{1}$:
\begin{equation}\label{dicke_def}
  \sym{k} = {\binom{n}{k}}^{- 1/2} \sum_{\text{perm}}
  \underbrace{ \ket{0} \ket{0} \cdots \ket{0} }_{n-k}
  \underbrace{ \ket{1} \ket{1} \cdots \ket{1} }_{k} \, .
\end{equation}
By means of the Majorana representation any permutation\dash symmetric
state $\psis$ of $n$ spin-$\frac{1}{2}$ particles can be uniquely
represented, up to an unphysical global phase, by a multiset of $n$
points on $S^{2}$, with an isomorphism mediating between the pure
states of the symmetric subspace and the set of $n$ unit vectors in
$\mbbrr$ \cite{Majorana32,Bacry74}.  Mathematically, this is expressed
as
\begin{equation}\label{majorana_definition}
  \psis = \frac{\E^{\I \delta}}{\sqrt{K}}
  \sum_{ \text{perm} } \ket{\phi_{P(1)}} \ket{\phi_{P(2)}}
   \cdots \ket{\phi_{P(n)}} \, ,
\end{equation}
where $\E^{\I \delta}$ is a global phase, $K$ the normalization
factor, and the sum runs over all permutations of $n$ single qubit
states $\ket{\phi_i} = \cos \frac{\theta_i}{2} \ket{0} + \E^{\I
  \varphi_i} \sin \frac{\theta_i}{2} \ket{1}$.  Thus the multi-qubit
state $\psis$ can be visualized by $n$ Bloch vectors $\ket{\phi_i}$ on
the surface of a sphere.  These points are called the Majorana points
(MP), and the sphere is called the Majorana sphere.  See, for example,
\cite{Markham10,Aulbach10} for some examples of Majorana
representations.

By means of a stereographic projection the MPs can be projected from
the sphere onto the complex plane, where they coincide with the roots
of the Majorana polynomial
\begin{equation}\label{majpoly}
  \psi (z) = \sum_{k=0}^{n} (-1)^{k-n} a_{k} \sqrt{\tbinom{n}{k}} \,
  z^{k} \: \propto \: \prod_{i=1}^{n} ( z - z_{i} ) \enspace .
\end{equation}
The function $\psi (z)$ represents symmetric states in terms of spin
coherent states \cite{Kolenderski08}, and is also known as the
characteristic polynomial, amplitude function \cite{Radcliffe71}, or
coherent state decomposition \cite{Leboeuf91}.

\section{Entanglement Classes}\label{entclasses}

In order to categorize different types of entanglement, the given
Hilbert space can be partitioned into equivalence classes.  For LOCC
operations the equivalence classes contain those states that can be
deterministically interconverted by means of local operations and
classical communication.  A coarser partition is achieved by SLOCC
equivalence, which is mediated by stochastic LOCC operations
\cite{Bennett00,Mathonet10}. In the symmetric sector, a yet more
coarse partition is the \textit{Degeneracy Configuration} (DC), which
depends on the number of coinciding MPs of symmetric states
\cite{Bastin09}.  These three entanglement classification schemes will
now be outlined.

\subsection{LOCC}

It is known (Corollary 1 of \cite{Bennett00}) that two states are
LOCC-equivalent iff they are LU-equivalent. For multiqubit symmetric
states the condition for LOCC-equivalence of two states $\psis$ and
$\phis$ reads:
\begin{equation}\label{locccond}
  \psis \stackrel{\text{LOCC}}{\longleftrightarrow} \phis
  \Leftrightarrow \exists \, \matha \in \suc :
  \psis = \matha^{\otimes n} \phis .
\end{equation}
The LU can be restricted to the form $\matha^{\otimes n}$, because
there always exists a fully symmetric LU that mediates between two
LOCC-equivalent symmetric states \cite{Mathonet10,Cenci11}.  The
special unitary group $\suc$ has 3 real degrees of freedom (d.f.)
that can be identified with the three rotation axes on the Bloch
sphere. The effect of $\matha^{\otimes n}$ on $\psis$ can then be
understood as a rotation of the Majorana sphere which changes the
location of MPs, but leaves the relative MP distribution
(i.e., distances and angles) intact \cite{Markham10}.

\subsection{SLOCC}

SLOCC operations are mathematically expressed as invertible local
operations \cite{Dur00}.  In the case of two $n$ qubit symmetric
states $\psis$ and $\phis$, the condition for SLOCC-equivalence can be
cast as:
\begin{equation}\label{slocccond}
  \psis \stackrel{\text{SLOCC}}{\longleftrightarrow} \phis
  \Leftrightarrow \exists \, \mathb \in \slc :
  \psis = \mathb^{\otimes n} \phis .
\end{equation}
This operation can be chosen to be fully symmetric \cite{Mathonet10},
and from \Eq{majorana_definition} it is clear that $\mathb$ acts on
each MP individually.  In the following we will therefore always
consider single-qubit operations $\mathb$ (or $\matha$) instead of the
tensor product $\mathb^{\otimes n}$ (or $\matha^{\otimes n}$) whenever
referring to symmetric SLOCC (or LOCC) operations.  The special linear
group $\slc$ which contains the $2 \times 2$ complex matrices with
unit determinant has six real d.f., and because of $\suc \subset \slc$
three of them can be identified as rotations of the Bloch sphere.  The
Lie group $\slc$ is a double cover of the M\"{o}bius group, the
automorphism group on the Riemann sphere. Therefore the
transformations of MPs under symmetric SLOCC operations are described
by the M\"{o}bius transformations of complex analysis, with the
Majorana sphere in lieu of the Riemann sphere (see \Fig{ghztrafos} and
\Fig{sterproj}).  The concept of M\"{o}bius transformations will be
outlined in detail in \Sect{mobius}.

\begin{figure}
  \includegraphics{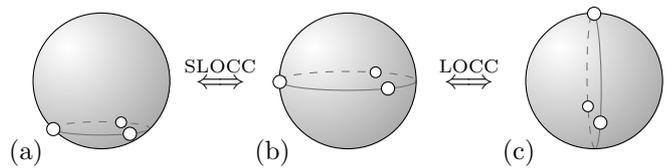}
  \caption{\label{ghztrafos} MPs (shown as white dots) of three
    different symmetric states of 3 qubits. The GHZ state $\sym{0} +
    \sym{3}$, displayed in (b), is LOCC-equivalent to the state
    $\sym{0} + \sqrt{3} \sym{2}$ shown in (c).  The GHZ-like state
    $\alpha \sym{0} + \beta \sym{3}$ in (a) is not LOCC\protect\dash
    equivalent to the others, but can be reached by a SLOCC
    operation.}
\end{figure}

\subsection{Degeneracy Configuration}

The Degeneracy Configuration (DC) of a symmetric $n$ qubit state is
characterized by the number of coinciding MPs \cite{Bastin09}.  The DC
class $\mathd_{n_1 , \ldots , n_d}$ with $n = n_1 + \ldots + n_d$
($n_1 \geq \ldots \geq n_d$) encompasses those states where $n_1$ MPs
coincide on one point of the Bloch sphere, $n_2$ on a different point,
and so on. The number $d$ is called the \textit{diversity degree}, and
the number of DC classes for $n$ qubit symmetric states is given by
the partition function $p(n)$.  The DC class of a given symmetric
state does not change under symmetric SLOCC operations, because of the
automorphism nature of the M\"{o}bius group. On the other hand, two
states that belong to the same DC class do not necessarily belong to
the same SLOCC class \cite{Bastin09}.

\subsection{Hierarchy of classification schemes}

Given two partitions $A$ and $B$ of a set $M$, the partition $A$ is
called a \textit{refinement} of $B$ ($A \leq B$) if every element of
$A$ is a subset of some element of $B$.  Since LOCC is a special case
of SLOCC, and because the DC is invariant under SLOCC, the following
statement can be made:
\begin{theorem}\label{hierarchy}
  The symmetric subspace of every pure \emph{n} qubit Hilbert space
  has the following refinement hierarchy of entanglement partitions:
  \begin{equation}
    \text{LOCC} \leq \text{SLOCC} \leq \text{DC} \enspace .
  \end{equation}
\end{theorem}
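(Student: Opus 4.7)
The plan is to prove the two refinement relations separately, both by unpacking the definitions of the equivalence classes and invoking inclusions or invariance properties already established earlier in the paper.

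For the first inclusion $\text{LOCC} \leq \text{SLOCC}$, I would take an arbitrary LOCC equivalence class $[\psis]_{\text{LOCC}}$ and show it sits inside a single SLOCC class. Given $\phis \in [\psis]_{\text{LOCC}}$, by \Eq{locccond} there exists $\matha \in \suc$ with $\phis = \matha^{\otimes n} \psis$. Since $\suc \subset \slc$, the same $\matha$ qualifies as an element $\mathb \in \slc$ in \Eq{slocccond}, so $\phis$ and $\psis$ are SLOCC-equivalent. Hence $[\psis]_{\text{LOCC}} \subseteq [\psis]_{\text{SLOCC}}$, which is exactly the refinement condition.

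For the second inclusion $\text{SLOCC} \leq \text{DC}$, I would show that the DC class is a SLOCC invariant, so every SLOCC class lies in a single DC class. Suppose $\psis \stackrel{\text{SLOCC}}{\longleftrightarrow} \phis$ via $\mathb \in \slc$. By the discussion following \Eq{slocccond}, $\mathb$ acts independently on each Majorana point and the induced transformation on the Majorana sphere is a M\"{o}bius transformation. Since M\"{o}bius transformations are automorphisms of the Riemann sphere, they are in particular bijections, so they preserve coincidences: a collection of $n_j$ MPs meeting at a single point is mapped to $n_j$ MPs meeting at a single (image) point, and distinct points remain distinct. Therefore the multiset of multiplicities $(n_1, \ldots, n_d)$ is preserved, i.e.\ $\psis$ and $\phis$ lie in the same DC class $\mathd_{n_1, \ldots, n_d}$.

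Combining both containments gives the refinement hierarchy. I do not anticipate a real technical obstacle here: both steps are essentially bookkeeping once one accepts (i) the group inclusion $\suc \subset \slc$, (ii) that symmetric LU/SLOCC equivalences can be implemented by tensor-power operations (cited from \cite{Mathonet10,Cenci11}), and (iii) that symmetric SLOCC acts on MPs as a M\"{o}bius automorphism. The only subtlety worth flagging explicitly is that the refinements are generally strict: as noted in the text and illustrated in \Fig{ghztrafos}, there are SLOCC-equivalent states which are not LOCC-equivalent, and \cite{Bastin09} exhibits states sharing a DC but lying in distinct SLOCC classes, so $\leq$ cannot be strengthened to equality.
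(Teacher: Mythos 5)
Your proposal is correct and follows exactly the paper's reasoning: the first refinement comes from $\suc \subset \slc$ (LOCC being a special case of SLOCC), and the second from the invariance of the degeneracy configuration under SLOCC due to the automorphism (bijection) nature of the M\"{o}bius group, which is precisely the justification the paper gives in the sentence preceding the theorem. Your additional remarks on the strictness of the refinements and on the reduction to tensor-power operations are consistent with the paper's discussion.
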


\section{M\"{o}bius Transformations}\label{mobius}

As outlined in the previous section, SLOCC operations between
multiqubit symmetric states can be understood as M\"{o}bius
transformations.  These isomorphic functions $f : \cext \rightarrow
\cext$ are defined on the extended complex plane $\cext = \mbbc \cup
\{ \infty \}$ as the rational functions
\begin{equation}\label{mobiusform}
  f(z) = \frac{az + b}{cz + d} \enspace ,
\end{equation}
with $a, b, c, d \in \mbbc$, and $ad - bc \neq 0$. The latter
condition ensures that $f$ is invertible.  The coefficients give rise
to the matrix representation $\mathb = \left(
  \begin{smallmatrix}
    a & b \\
    c & d
  \end{smallmatrix}
\right)$ of the M\"{o}bius group, and from \Eq{mobiusform} it is clear
that it suffices to consider those $\mathb$ with determinant one
(i.e., $ad - bc = 1$).  Since $+\mathb$ and $-\mathb$ desribe the same
transformation $f$, the M\"{o}bius group is isomorphic to the
projective special linear group $\pslc = \slc / \{ \pm \openone \}$.

\begin{figure}
  \includegraphics{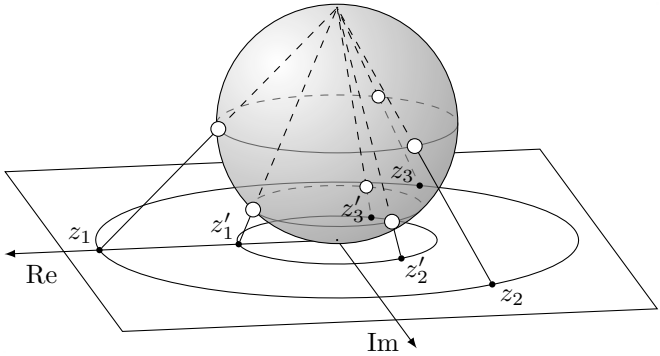}
  \caption{\label{sterproj} A stereographic projection through the
    north pole of the Majorana sphere mediates between the Majorana
    roots in the complex plane and the MPs on the surface of the
    sphere. The SLOCC operation of \Fig{ghztrafos} is facilitated by
    the transformation $f(z) = z/2$ which maps the set of roots $\{
    z_1 , z_2 , z_3 \}$ onto the set $\{ z'_1 , z'_2 , z'_3 \}$, thus
    lowering the ring of MPs. On the sphere M\"{o}bius transformations
    always project circles onto other circles \protect\cite{Knopp}.}
\end{figure}

By means of an inverse stereographic projection all points of $\cext$
can be projected onto the Riemann sphere.  As seen in \Fig{sterproj},
the complex plane is projected to the surface of the sphere along rays
originating from the north pole, and by convention the infinitely
remote point $\infty \in \cext$ is projected onto the north pole.  By
the same projection the roots $\{ z_1 , \ldots , z_n \}$ of the
Majorana polynomial \eqref{majpoly} are associated with the MPs on the
surface of the Majorana sphere. Therefore the Riemann sphere can be
employed as the Majorana sphere, and symmetric SLOCC operations have
the effect of transforming one set of Majorana roots (or equivalently
MPs) to another:

\begin{theorem}
  Two symmetric $n$ qubit states are SLOCC-equivalent iff there exists
  a M\"{o}bius transformation \eqref{mobiusform} between their
  Majorana roots.
\end{theorem}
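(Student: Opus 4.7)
The plan is to prove both implications by showing that the action of $\mathb \in \slc$ on the single-qubit factors in the Majorana decomposition \eqref{majorana_definition} descends, via the stereographic projection of \Fig{sterproj}, to a M\"{o}bius transformation of the Majorana roots, and then using the uniqueness of the Majorana representation to pass between ``equal as states'' and ``equal as multisets of roots.''

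The key ingredient to establish first is the single-qubit dictionary. A state $\alpha\ket{0} + \beta\ket{1}$ corresponds under the stereographic projection to a fixed ratio of $\alpha$ and $\beta$ in $\cext$. Acting by $\mathb = \left(\begin{smallmatrix} a & b \\ c & d \end{smallmatrix}\right) \in \slc$ sends $(\alpha,\beta)^{T}$ to $(a\alpha + b\beta, c\alpha + d\beta)^{T}$, so the new ratio is a fractional linear function of the old of the form \eqref{mobiusform}. This direct calculation realizes the double cover $\slc \to \pslc$ of the M\"{o}bius group and shows that a single-qubit $\mathb$ acts on each Majorana point as the corresponding M\"{o}bius transformation $f$.

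For the forward direction ($\Rightarrow$), assume $\psis = \mathb^{\otimes n} \phis$ as in \eqref{slocccond}. Write $\phis$ via \eqref{majorana_definition} as the symmetrization of $\ket{\phi_1} \cdots \ket{\phi_n}$, whose factors correspond to the Majorana roots $z_1,\ldots,z_n$. Because $\mathb^{\otimes n}$ commutes with every permutation in $S_n$, it sends this into the symmetrization of $\mathb\ket{\phi_1} \cdots \mathb\ket{\phi_n}$, up to the normalization $K$ and the phase $\E^{\I\delta}$ of \eqref{majorana_definition}. By the dictionary, this state has Majorana roots $f(z_1),\ldots,f(z_n)$; since it equals $\psis$, those are the roots of $\psis$, and $f$ is the required M\"{o}bius transformation. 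Conversely ($\Leftarrow$), suppose a M\"{o}bius transformation \eqref{mobiusform} maps the roots $\{z_i\}$ of $\phis$ onto the roots of $\psis$. Lift $f$ to $\mathb \in \slc$, which is possible because $\pslc = \slc/\{\pm \openone\}$. Then $\mathb^{\otimes n} \phis$ has Majorana roots $f(z_1),\ldots,f(z_n)$ by the dictionary, which coincide with those of $\psis$; uniqueness of the Majorana representation then forces $\mathb^{\otimes n} \phis$ and $\psis$ to be the same physical state, establishing SLOCC-equivalence via \eqref{slocccond}.

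The main obstacle I expect is the book-keeping around normalization and global phases: $\mathb^{\otimes n}$ is not norm-preserving, and the Majorana representation is defined only up to an overall phase, so the equality in \eqref{slocccond} must be read projectively, with any prefactor absorbed into $K$ and $\E^{\I\delta}$ of \eqref{majorana_definition}. A secondary technical point is that the two lifts $\pm\mathb$ of the same $f$ yield the same physical symmetric state, since $(-\mathb)^{\otimes n} = (-1)^n \mathb^{\otimes n}$ differs only by a global sign.
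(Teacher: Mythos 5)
Your proposal is correct and follows essentially the same route as the paper, which states this theorem without a separate formal proof but justifies it in the preceding discussion via exactly your ingredients: the restriction to fully symmetric operations $\mathb^{\otimes n}$, the fact that $\mathb$ acts on each Majorana point individually in \eqref{majorana_definition}, and the double cover $\slc \to \pslc$ identifying the single-qubit action with a fractional linear map of the stereographically projected roots. Your explicit attention to the projective reading of \eqref{slocccond} and to the sign ambiguity of the lift is a reasonable tightening of what the paper leaves implicit.
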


M\"{o}bius transformations can be categorized into different types,
namely, \textit{parabolic, elliptic, hyperbolic} and
\textit{loxodromic} \cite{Knopp}, but a unifying feature is that two
(not necessarily diametral) points on the sphere are left invariant.
This generalizes the $\suc$ rotations where the diametrically opposite
intersections of the rotation axis with the sphere are left invariant.
The SLOCC operation from \Fig{ghztrafos}, mediated by the M\"{o}bius
transformation $f(z) = z/2$, is shown in detail in \Fig{sterproj}.
This transformation is hyperbolic, which means that the two invariant
points (here the north and south pole) act as attractive and repulsive
poles, with the MPs moving away from the repulsive pole towards the
attractive one.

A well-known property of M\"{o}bius transformations is that for any
two ordered sets of three pairwise distinct points $\{ v_1 , v_2 , v_3
\}$ and $\{ w_1 , w_2 , w_3 \}$ there always exists exactly one
M\"{o}bius transformation that maps one set to the other \cite{Knopp}.
With this it immediately becomes clear why DC classes $\mathd_{n_1 ,
  \ldots , n_d}$ with a diversity degree $d \leq 3$ consist of a
single SLOCC class \cite{Bastin09}.

In the following the three d.f. of M\"{o}bius transformations
\eqref{mobiusform} which genuinely belong to SLOCC operations (i.e.,
which cannot be realized by LOCC operations) are isolated, and a
visual interpretation in terms of the Majorana representation is
given.

\begin{theorem}\label{decomp}
  Every SLOCC operation between two symmetric $n$ qubit states can be
  factorized into an affine M\"{o}bius transformation of the form
  \begin{equation}\label{mobiusform2}
    \widetilde{f}(z) = A z + B \enspace , \enspace
    \text{with $A > 0$ , $B \in \mbbc$} \enspace ,
  \end{equation}
  and a LOCC operation. This decomposition is unique, and the set of
  transformations \eqref{mobiusform2} forms a group that is isomorphic
  to $\slc / \suc$.
\end{theorem}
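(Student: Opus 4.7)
\emph{Plan.} The approach will be to canonically factor each $\mathb \in \slc$ into an ``affine part'' $\mathcal{T}$ and an $\suc$ operation $\matha$. I would proceed in three steps: (a) translate~(\ref{mobiusform2}) into a matrix statement, (b) use a Gram--Schmidt-style construction to exhibit the factorization, and (c) dispatch uniqueness and the group identification by short algebra. Concretely, the affine map $\widetilde{f}(z) = Az + B$ with $A > 0$, $B \in \mbbc$ corresponds to the $\slc$ matrix $\mathcal{T} = \left(\begin{smallmatrix} \sqrt{A} & B/\sqrt{A} \\ 0 & 1/\sqrt{A} \end{smallmatrix}\right)$. Writing $\mathcal{P} \subset \slc$ for the set of these matrices, a direct $2 \times 2$ computation shows $\mathcal{P}$ is closed under products and inverses, so it forms a subgroup with three real parameters, matching $\dim \slc - \dim \suc = 3$. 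The theorem then reduces to the claim that every $\mathb \in \slc$ factors uniquely as $\mathb = \mathcal{T} \matha$ with $\mathcal{T} \in \mathcal{P}$ and $\matha \in \suc$.

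\emph{Existence and uniqueness.} For existence I would construct $\matha$ explicitly by Gram--Schmidt on the rows of $\mathb = \left(\begin{smallmatrix} a & b \\ c & d \end{smallmatrix}\right)$: setting $N = \sqrt{|c|^{2} + |d|^{2}} > 0$ and $\matha = \left(\begin{smallmatrix} \bar{d}/N & -\bar{c}/N \\ c/N & d/N \end{smallmatrix}\right) \in \suc$, a one-line check yields $\mathb \matha^{\dagger} \in \mathcal{P}$, so taking $\mathcal{T} := \mathb \matha^{\dagger}$ gives $\mathb = \mathcal{T} \matha$. (This is the $\slc$ case of the Iwasawa decomposition, specialized to a $2\times 2$ matrix.) For uniqueness, if $\mathcal{T}_1 \matha_1 = \mathcal{T}_2 \matha_2$, then $\mathcal{T}_2^{-1} \mathcal{T}_1 = \matha_2 \matha_1^{-1}$ lies in $\mathcal{P} \cap \suc$. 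A unitary upper-triangular matrix is necessarily diagonal, and positivity of its diagonal combined with unit modulus forces it to be $\openone$.

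\emph{Group structure and main obstacle.} Since $\mathcal{P}$ is already a group, and every left coset $\mathb \cdot \suc$ contains exactly one representative in $\mathcal{P}$, the decomposition yields a bijection $\mathcal{P} \leftrightarrow \slc / \suc$, through which the group law of $\mathcal{P}$ is transported to the coset space. The main obstacle here is conceptual rather than computational: $\suc$ is \emph{not} normal in $\slc$, so $\slc/\suc$ does not carry a natural quotient group structure, and the ``isomorphism'' in the statement is best read as this bijection (equivalently, as an isomorphism of homogeneous $\slc$-spaces). A related subtlety is the convention $A > 0$ rather than $A \in \mbbr \setminus \{0\}$, needed to absorb the $\pm\mathb$ ambiguity intrinsic to the Möbius group $\pslc$; without it the map $\mathcal{P} \to \slc/\suc$ would be two-to-one. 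Everything else reduces to the Gram--Schmidt formula above.
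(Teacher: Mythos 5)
Your proof is correct, and at bottom it is the same decomposition the paper uses: both amount to the QR/Iwasawa factorization of $\slc$ into a unitary factor and an upper-triangular factor with positive diagonal, the latter being the matrix form of the affine map $z \mapsto Az+B$. The differences are still worth recording. First, you factor as $\mathb = \mathcal{T}\matha$ (rotation applied first, then the affine map), whereas the paper writes $\lambda\mathb = \matha\cdot\left(\begin{smallmatrix} A & B\\ 0 & 1\end{smallmatrix}\right)$, i.e.\ affine part first, then the rotation; both orderings yield existence and uniqueness, but the paper's is the one consistent with its later description of a general SLOCC operation as ``a translation of the Majorana sphere, followed by a rotation'', so your formulas (in particular your $N=\sqrt{|c|^2+|d|^2}$ versus the paper's $\lambda=\sqrt{a\cc{a}+c\cc{c}}$) would need to be transposed to match the rest of the text. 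Second, by taking the unimodular representative $\left(\begin{smallmatrix}\sqrt{A} & B/\sqrt{A}\\ 0 & 1/\sqrt{A}\end{smallmatrix}\right)$ you stay inside $\slc$ and avoid the paper's auxiliary rescaling by $\lambda$ and the attendant ``up to prefactors $\lambda_1,\lambda_2$'' bookkeeping in its uniqueness argument; your uniqueness proof via $\mathcal{P}\cap\suc=\{\openone\}$ is the cleaner route to the same conclusion. Third, your caveat that $\suc$ is not normal in $\slc$, so that ``forms a group isomorphic to $\slc/\suc$'' can only mean a bijection with the coset space through which the group law of $\mathcal{P}$ is transported, is accurate and makes precise a point the paper leaves implicit.
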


\begin{proof}
  First the existence of a factorization of each SLOCC operation into
  a transformation $\widetilde{f}$ and a LOCC operation is shown.  For
  each $\mathb \in \slc$ we define $\widetilde{\mathb} =
  \lambda \mathb$ with $\lambda = \sqrt{a \cc{a} + c \cc{c}} >
  0$. Since $\widetilde{\mathb}$ describes the same SLOCC operation as
  $\mathb$, it suffices to show that $\widetilde{\mathb}$ can be
  decomposed into a LOCC operation $\matha \in \suc$ and a M\"{o}bius
  transformation of the form \eqref{mobiusform2}:
  \begin{equation*}
    \begin{pmatrix}
      \lambda a & \lambda b \\
      \lambda c & \lambda d
    \end{pmatrix} =
    \begin{pmatrix}
      \alpha & - \cc{\beta} \\
      \beta & \cc{\alpha}
    \end{pmatrix}
    \otimes
    \begin{pmatrix}
      A & B \\
      0 & 1
    \end{pmatrix} \enspace ,
  \end{equation*}
  with $A > 0$ and $\alpha , \beta , B \in \mbbc$, $\alpha \cc{\alpha}
  + \beta \cc{\beta} = 1$.  For given parameters $a,b,c,d \in \mbbc$
  with $ad - bc = 1$, this is fulfilled for $\alpha =
  \frac{a}{\lambda}$, $\beta = \frac{c}{\lambda}$, $A = \lambda^{2}$
  and $B = \frac{\lambda^{2} b + \cc{c}}{a} = \frac{\lambda^{2} d -
    \cc{a}}{c}$ (for $a=0$ or $c=0$ only one of the two identities
  holds). This proves the existence of a factorization.

  To show the uniqueness of factorizations, it is assumed that a given
  SLOCC operation $\mathb \in \slc$ can be factorized, up to scalar
  prefactors $\lambda_{1} , \lambda_{2} \in \mbbc \backslash \{ 0 \}$,
  in the above way by two sets of parameters $\{ \alpha_{1} ,
  \beta_{1} , A_{1} , B_{1} \}$ and $\{ \alpha_{2} , \beta_{2} , A_{2}
  , B_{2} \}$. Elimination of $\mathb$ from the resulting matrix
  equations yields the condition
  \begin{equation*}
    \frac{\lambda_2}{\lambda_1}
    \begin{pmatrix}
      \alpha_1 & - \cc{\beta_1} \\
      \beta_1 & \cc{\alpha_1}
    \end{pmatrix}
    \otimes
    \begin{pmatrix}
      A_1 & B_1 \\
      0 & 1
    \end{pmatrix} =
    \begin{pmatrix}
      \alpha_2 & - \cc{\beta_2} \\
      \beta_2 & \cc{\alpha_2}
    \end{pmatrix}
    \otimes
    \begin{pmatrix}
      A_2 & B_2 \\
      0 & 1
    \end{pmatrix}
     .
  \end{equation*}
  A straightforward calculation yields $\lvert
  \frac{\lambda_{2}}{\lambda_{1}} \rvert = 1$, and from this it
  readily follows that the two sets of parameters must coincide.  This
  uniqueness implies that the set of transformations $\widetilde{f}$
  is isomorphic to $\slc / \suc$, and their group properties are
  easily verified explicitly.
\end{proof}

\Theo{decomp} is closely related to the polar decomposition of
matrices which states that every invertible complex matrix can be
uniquely decomposed into a unitary matrix and a positive-semidefinite
Hermitian matrix \cite{Golub}.  However, while the matrices of the
affine transformations $\widetilde{f}$ are positive, they are in
general not Hermitian, and the introduction of the prefactor $\lambda$
in the proof is necessary because $\matha$ and $\mathb$ are defined to
have unit determinants.

\begin{figure}[ht]
  \includegraphics{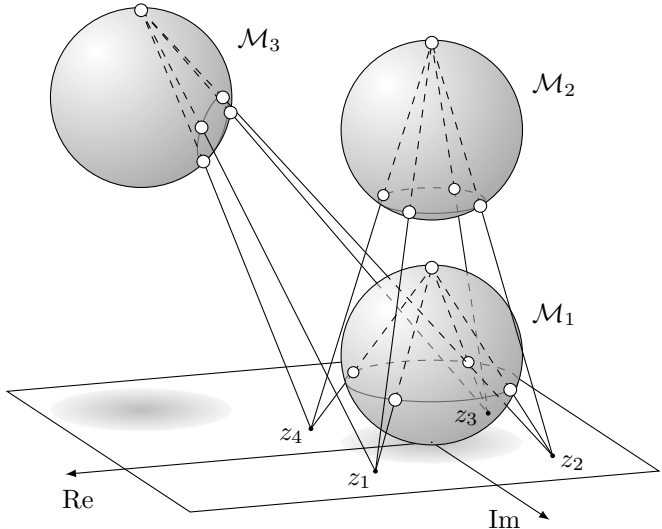}
  \caption{\label{translations} Alternative visualization of
    M\"{o}bius transformations where a fixed set of complex points is
    projected onto the surface of a moving sphere.  The three innate
    freedoms of SLOCC operations not present in LOCC operations are
    then described by the translations of the Majorana sphere in
    $\mbbrr$.  The north pole of sphere $\mathm_{1}$ (with the MP
    distribution of the 5 qubit ``square pyramid'' state outlined in
    \protect\cite{Aulbach10}) lies 2 units above the origin of the
    complex plane, while the one of $\mathm_{2}$ lies 5 units above,
    and $\mathm_{3}$ is additionally displaced horizontally by a
    vector $5 - 5 \I$.  The parameters $( A , B )$ of \Eq{mobiusform2}
    for the transformation of $\mathm_{1}$ to $\mathm_{2}$ and
    $\mathm_{3}$ are $( \frac{5}{2} , 0 )$ and $( \frac{5}{2} , 5 - 5
    \I )$, respectively.}
\end{figure}

The orthodox way to visualize M\"{o}bius transformations is to fix the
Riemann sphere in $\mbbrr$ (usually with the sphere's center or south
pole coinciding with the complex plane's origin), and points $\{ z_1 ,
\ldots , z_n \}$ on the complex plane are transformed to different
points $\{ z'_1 , \ldots , z'_n \}$ under the action of the functions
\eqref{mobiusform}.  By means of the inverse stereographic projection,
this transformation can then be observed on the sphere too, as seen in
\Fig{sterproj}.

Alternatively, the points in the plane can be considered fixed, and
instead the Riemann sphere moves in $\mbbrr$, as shown in
\Fig{translations}.  The six d.f. of the M\"{o}bius transformations
are then split into three translational freedoms (movement of sphere
in $\mbbrr$) and three rotational freedoms (rotation of sphere around
its axes).  By considering these elementary operations it can be
verified by calculation that this is an equivalent way of viewing the
change of points on the sphere under the action of M\"{o}bius
transformations.  In this approach the affine transformations
\eqref{mobiusform2} are easily identified as the set of all
translations in $\mbbrr$ which leave the sphere's north pole above the
complex plane. A general SLOCC operation between symmetric states can
therefore be described as a translation of the Majorana sphere in
$\mbbrr$, followed by a rotation.  The parameters of the affine
function $\widetilde{f} (z) = A z + B$ are connected to the
translation as follows: The parameter $A = \frac{h_{2}}{h_{1}}$ is the
ratio of the heights of the north pole before ($h_{1}$) and after
($h_{2}$) the transformation, and $B$ is the horizontal displacement
vector (cf. \Fig{translations}).

\section{Representative States for Symmetric Entanglement
  Classes}\label{representative}

Multiqubit entanglement classes have been well studied before, in
particular, the SLOCC-equivalent classes for a single copy of a pure
$n$ qubit state.  For 2 qubits every entangled state can be turned
into a singlet by a SLOCC operation, while for 3 qubits there exist
three classes with non-symmetric bipartite entanglement as well as two
classes for GHZ-type and W-type entanglement \cite{Dur00,Acin00}.  For
as few as 4 qubits, however, the number of SLOCC classes becomes
infinite \cite{Dur00}.  Verstreate et \emph{al.}  \cite{Verstraete02}
suggested to solve this dilemma by identifying nine different
\emph{families} of 4 qubit entanglement, and a similar approach was
pursued by Lamata et \emph{al.}  \cite{Lamata07}.  In the symmetric
sector a different approach is the introduction of the aforementioned
DC classes.

\begin{figure}[ht]
  \includegraphics{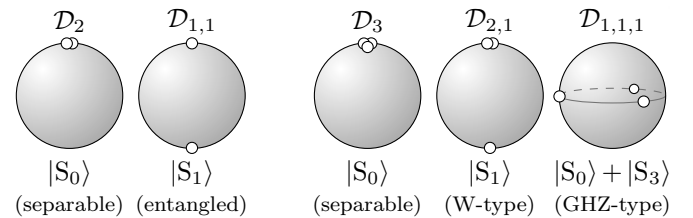}
  \caption{\label{slocc23} All DC classes of 2 and 3 qubit symmetric
    states are listed together with representative states and their MP
    distribution. Each DC class is also a SLOCC class, which implies
    that every state of a DC class can be reached from the
    representative state by a SLOCC operation.}
\end{figure}

In the following the SLOCC and DC classes of symmetric states of up to
$5$ qubits are characterized, and representative states with simple MP
distributions are given for each equivalence class.  Since all DC
classes of 2 and 3 qubit states have a diversity degree of 3 or less,
their DC classes are identical to SLOCC classes.  In \Fig{slocc23}
these classes are listed together with a representative state for each
class.  For three qubits the class $\mathd_{3}$ contains the separable
states, $\mathd_{2,1}$ the W-type entangled states and
$\mathd_{1,1,1}$ the GHZ-type entangled states.

\begin{figure}
  \includegraphics{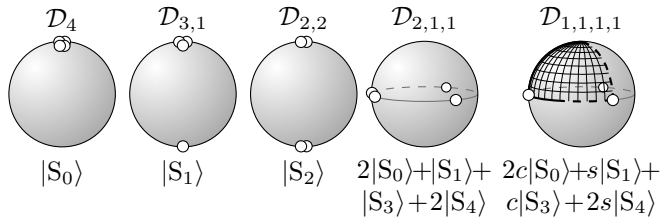}
  \caption{\label{slocc4} Only four of the five DC classes of 4 qubit
    symmetric states coincide with a single SLOCC class. Due to the
    continuum of SLOCC classes in $\mathd_{1,1,1,1}$, only three MPs
    can be fixed in its representative state, with the unique
    locations for the fourth MP $c \ket{0} + s \ket{1}$ parameterizing
    the set of representative states. Here $c = \cos \frac{\theta}{2}$
    and $s = \E^{\I \varphi} \sin \frac{\theta}{2}$, and the range of
    parameters is $( \theta , \varphi ) \in \bmr{ \{ } [0,
    \frac{\pi}{2}) \times [0, \frac{2 \pi}{3}) \bmr{ \} } \cup \bmr{
      \{ } \{ \frac{\pi}{2} \} \times (0 , \frac{\pi}{3}] \bmr{ \} }$,
    shown as a black grid.  The fixed equatorial MPs of the
    representative states are equidistantly spaced.}
\end{figure}

For symmetric states of 4 qubits there exist five DC classes and a
continuum of SLOCC classes \cite{Bastin09}.  As shown in \Fig{slocc4},
four of the DC classes coincide with SLOCC classes, while the generic
class $\mathd_{1,1,1,1}$ is comprised of a continuum of SLOCC classes
(cf. Figure 2 in \cite{Markham10}).  The high symmetry present in an
equidistant distribution of three MPs along the equator facilitates
the restriction of the remaining parameters of the generic SLOCC
classes to a well-defined, connected area on the sphere's surface:

\begin{theorem}\label{theorem4qubit}
  Every symmetric state of 4 qubits is SLOCC-equivalent to exactly one
  state of the set \vspace{2mm} \\
  $\{ \sym{0}, \enspace \sym{1}, \enspace \sym{2}, \enspace 2 \sym{0}
  + t \sym{1} + \sym{3} + 2t
  \sym{4} \} \, , \text{ with}$ \vspace{2mm} \\
  $t = \E^{\I \varphi} \tan \frac{\theta}{2}$, and $( \theta , \varphi
  ) \! \in \! \bm{ \{ } [0, \frac{\pi}{2}) \! \times \! [0, \frac{2
    \pi}{3}) \bm{ \} } \cup \bm{ \{ } \{ \frac{\pi}{2} \} \! \times \!
  [0 , \frac{\pi}{3}] \bm{ \} }.$
\end{theorem}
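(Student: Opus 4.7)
The plan is to combine \Theo{decomp} (implicitly) with \Theo{hierarchy} and the sharp 3-transitivity of the M\"{o}bius group on $\cext$. First I would dispatch the four non-generic DC classes. Since each of $\mathd_4, \mathd_{3,1}, \mathd_{2,2}, \mathd_{2,1,1}$ has diversity degree $d \leq 3$, each is already known to coincide with a single SLOCC class (see the remark after the three-point transitivity property in \Sect{mobius}). A direct evaluation of \eqref{majpoly} shows that $\sym{0}, \sym{1}, \sym{2}$ realize the MP multiplicity patterns $(4), (3,1)$ and $(2,2)$ respectively, so they serve as representatives for $\mathd_4, \mathd_{3,1}, \mathd_{2,2}$. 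The remaining class $\mathd_{2,1,1}$ will be recovered as a boundary case of the generic family.

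For the continuum class $\mathd_{1,1,1,1}$ I would first use M\"{o}bius freedom to pin three of the four distinct MPs to the equatorial cube roots of unity $\{1, \omega, \omega^{2}\}$ with $\omega = \E^{\I 2\pi/3}$. The residual SLOCC freedom is then the stabilizer $\Gamma$ of this triple in the M\"{o}bius group, generated by $z \mapsto \omega z$ and $z \mapsto 1/z$ and isomorphic to $S_{3}$. Geometrically $\Gamma$ realizes the orientation\dash preserving symmetries of the equilateral triangle inscribed in the equator: three rotations about the polar axis, and three $\pi$\dash rotations about horizontal axes through the vertices. Every $\mathd_{1,1,1,1}$ SLOCC class is therefore uniquely represented by a $\Gamma$\dash orbit of the fourth MP on the Majorana sphere.

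Next I would exhibit a fundamental domain for the $\Gamma$-action. The $C_{3}$ subgroup reduces $\varphi$ to $[0, 2\pi/3)$; the generator $z \mapsto 1/z$ acts in spherical coordinates as $(\theta, \varphi) \mapsto (\pi - \theta, -\varphi)$, interchanging the two open hemispheres, so the open upper hemisphere $\theta \in [0, \pi/2)$ yields a clean fundamental region there. On the equator $\theta = \pi/2$ this involution composed with $C_{3}$ produces the reflection $\varphi \mapsto 2\pi/3 - \varphi$, whose fundamental interval $[0, \pi/3]$ contains exactly one representative from each equatorial $\Gamma$-orbit. Gluing these pieces reproduces the parameter range stated in the theorem.

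Finally, I would verify the explicit algebraic form. A direct expansion of \eqref{majpoly} yields $\psi(z) \propto (1-tz)(1-z^{3})$ for the state $2\sym{0} + t\sym{1} + \sym{3} + 2t\sym{4}$, so its Majorana roots are precisely $\{1, \omega, \omega^{2}, 1/t\}$; with $t = \E^{\I\varphi}\tan(\theta/2)$, $(\theta, \varphi)$ parametrises the fundamental domain, and the distinguished boundary point $(\pi/2, 0)$ (i.e.\ $t = 1$) degenerates the fourth root onto $z = 1$, delivering the sought $\mathd_{2,1,1}$ representative. The main obstacle is the careful bookkeeping of the equatorial boundary identifications under $\Gamma$: verifying that the half\dash open specification $\{\pi/2\} \times [0, \pi/3]$ picks up the $\Gamma$-fixed equatorial orbit at $\varphi = \pi/3$ and the $\mathd_{2,1,1}$ orbit at $\varphi = 0$ exactly once each, without duplicating any orbit at the seams with $[0, \pi/2) \times [0, 2\pi/3)$, is the most error\dash prone step.
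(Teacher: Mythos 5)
Your treatment of the non\dash generic classes, the factorization $\psi(z)\propto(1-tz)(1-z^{3})$, and the identification of $(\theta,\varphi)=(\frac{\pi}{2},0)$ as the $\mathd_{2,1,1}$ representative are all correct, and your existence argument is in substance the paper's: the generators $z\mapsto\omega z$ and $z\mapsto 1/z$ of your stabilizer $\Gamma\cong S_{3}$ are exactly the $\text{R}_{\text{z}}(\frac{2\pi}{3})$ and $\text{R}_{\text{x}}(\pi)$ rotations the paper uses to push the fourth MP into the stated region, and your fundamental\dash domain bookkeeping (open upper hemisphere modulo $C_{3}$, plus the closed equatorial arc $[0,\frac{\pi}{3}]$ for the dihedral action on the circle) checks out.

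The gap is in the uniqueness half, at the sentence ``the residual SLOCC freedom is then the stabilizer $\Gamma$ of this triple.'' That is precisely the claim that has to be proved, not assumed. A M\"{o}bius transformation $f$ witnessing SLOCC\dash equivalence of the two configurations only has to satisfy $f(\{1,\omega,\omega^{2},w\})=\{1,\omega,\omega^{2},w'\}$ as sets; nothing forces it to preserve the triangle. It may send $w$ onto a vertex and a vertex onto $w'$, in which case $w'$ is not a priori in the orbit $\Gamma w$, and such maps would create extra identifications that shrink your fundamental domain. Ruling these out is exactly what the paper does by testing all $4!$ bijections between the two quadruples against cross\dash ratio preservation. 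To repair your version, either reproduce that finite check, or note that the unordered\dash quadruple invariant $j$ (the symmetrized cross\dash ratio) is a degree\dash$6$ rational function of the fourth point, so its generic fibre has $6$ elements and is already exhausted by the $6$\dash element orbit $\Gamma w$; hence $j(w)=j(w')$ does force $w'\in\Gamma w$. The conclusion is true, but as written this step is unsupported, and it is the crux of ``exactly one.''
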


\begin{proof}
  First it will be shown that every symmetric 4 qubit state $\psis$
  can be transformed by SLOCC into one of the above states.  From the
  previous discussion and \Fig{slocc4}, this is clear for all DC
  classes except $\mathd_{1,1,1,1}$.  Given an arbitrary state $\psis
  \in \mathd_{1,1,1,1}$, there always exists a M\"{o}bius
  transformation $f: \psis \rightarrow \ket{\psi '^{\text{s}}}$ s.t.
  three of the distinct MPs are projected onto the three corners of an
  equilateral triangle in the equatorial plane. If the fourth MP
  $\ket{\phi_4}$ is not projected into the area parameterized by $(
  \theta , \varphi ) \in \bm{ \{ } [0, \frac{\pi}{2}) \times [0,
  \frac{2 \pi}{3}) \bm{ \} } \cup \bm{ \{ } \{ \frac{\pi}{2} \} \times
  (0 , \frac{\pi}{3}] \bm{ \} }$ (cf. \Fig{slocc4}), then it can be
  projected into that area through a combination of $\{
  \text{R}_{\text{x}} ( \pi ) , \text{R}_{\text{z}} ( \frac{2 \pi}{3}
  ) \}$-rotations of the Majorana sphere (which preserve the
  equatorial MP distribution).

  It remains to show that this set of states is unique, i.e., two
  different MPs $\ket{\phi_4}$ and $\ket{\phi_4 '}$ within the
  aforementioned parameter range give rise to two different states
  $\psis \neq \ket{\psi '^{\text{s}}}$ which are
  SLOCC-inequivalent. This can be verified by the cross-ratio
  preservation of M\"{o}bius transformations \cite{Knopp}, namely,
  that a projection of an ordered quadruple of distinct complex
  numbers $\{ v_1, v_2, v_3, v_4 \}$ onto another quadruple $\{ w_1,
  w_2, w_3, w_4 \}$ requires that
  \begin{equation*}
    \frac{(v_1 - v_3)(v_2 - v_4)}{(v_2 - v_3)(v_1 - v_4)} = 
    \frac{(w_1 - w_3)(w_2 - w_4)}{(w_2 - w_3)(w_1 - w_4)} \enspace .
  \end{equation*}
  By considering all 4! possible projections between the MPs of
  $\psis$ and $\ket{\psi '^{\text{s}}}$ it can be explicitly verified
  that a transformation is possible only if $\ket{\phi_4} =
  \ket{\phi_4 '}$.
\end{proof}

\begin{figure}
  \includegraphics{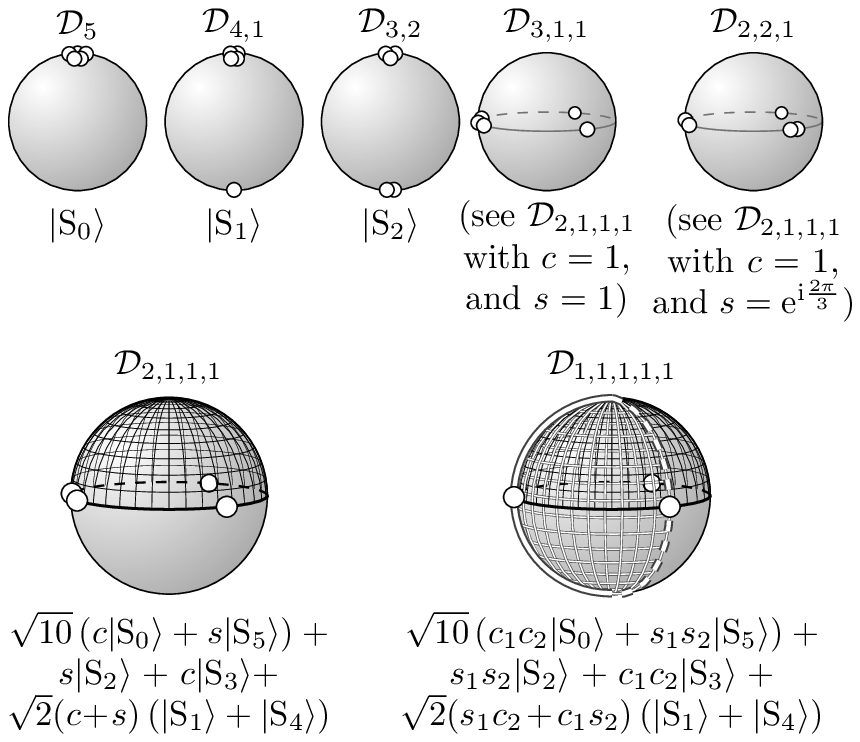}
  \caption{\label{slocc5} The first five of the seven DC classes of 5
    qubit symmetric states coincide with SLOCC classes, while the
    representative states of $\mathd_{2,1,1,1}$ are parameterized by
    one MP $c \ket{0} + s \ket{1}$ (black grid), and those of
    $\mathd_{1,1,1,1,1}$ by two MPs (black and white grid).  The
    parameter range for $\mathd_{2,1,1,1}$ is $( \theta , \varphi )
    \in \bmr{ \{ } [0, \frac{\pi}{2} ) \times [0, 2 \pi) \bmr{ \} }
    \cup \bmr{ \{ } \{ \frac{\pi}{2} \} \times (0 , \pi ] \bmr{ \} }
    \backslash \bmr{ \{ } \{ \frac{\pi}{2} \} \times \{ \frac{2
      \pi}{3} \} \bmr{ \} }$.  For $\mathd_{1,1,1,1,1}$ the range of
    $( \theta_1 , \varphi_1 )$ is the same as $( \theta , \varphi )$,
    and $( \theta_2 , \varphi_2 ) \in \bmr{ \{ } [0, \pi ] \times [0,
    \frac{2 \pi}{3}) \bmr{ \} } \backslash \bmr{ \{ } \{ \frac{\pi}{2}
    \} \times \{ 0 \} \bmr{ \} }$.  The fixed equatorial MPs of the
    representative states are all equidistantly spaced.}
\end{figure}

The DC classes of 5 qubits and representative states for the SLOCC
classes can be seen in \Fig{slocc5}. The SLOCC classes of the generic
class $\mathd_{1,1,1,1,1}$ can be parameterized by two complex
variables, corresponding to two MPs in the black and white area,
respectively.  Unlike the 4 qubit case, however, this parameterization
is neither unique, nor confined to the generic DC class.  Different
sets of parameters $( \theta_1 , \varphi_1 , \theta_2 , \varphi_2 )
\neq ( \theta'_1 , \varphi'_1 , \theta'_2 , \varphi'_2 )$ can give
rise to SLOCC-equivalent states, and for $( \theta_1 , \varphi_1 ) = (
\theta_2 , \varphi_2 )$ the corresponding state does not even belong
to $\mathd_{1,1,1,1,1}$ because of coinciding MPs.  A unique set of
representative states is therefore provided only for the subset of
symmetric states with a MP degeneracy:

\begin{theorem}\label{theorem5qubitdeg}
  Every symmetric state of 5 qubits with a MP degeneracy (i.e.,
  diversity degree $d < 5$) is SLOCC-equivalent to exactly one state
  of the set
  \vspace{2mm} \\
  $\{ \sym{0}, \: \: \sym{1}, \: \: \sym{2}, \: \: \sqrt{10} \left(
    \sym{0} + t \sym{5} \right) + t \sym{2} + \sym{3} + \sqrt{2}
  \left( 1 + t \right) \left( \sym{1} + \sym{4} \right) \} \, , \text{
    with } t = \E^{\I \varphi} \tan \frac{\theta}{2}$, and
  \vspace{2mm} \\
  $( \theta , \varphi ) \in \bm{ \{ } [0, \frac{\pi}{2} ) \times [0, 2
  \pi) \bm{ \} } \cup \bm{ \{ } \{ \frac{\pi}{2} \} \times [0 , \pi ]
  \bm{ \} }$ .
\end{theorem}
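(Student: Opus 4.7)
The plan is to mirror the proof of \Theo{theorem4qubit}, handling DC classes separately. The three DC classes $\mathd_5, \mathd_{4,1}, \mathd_{3,2}$ all have diversity degree $d \leq 3$, so by the triple-transitivity of M\"{o}bius transformations each coincides with a single SLOCC class. Direct inspection of the Majorana polynomials $\psi(z) \propto 1, z, z^2$ identifies $\sym{0}, \sym{1}, \sym{2}$ as canonical representatives, with MP multisets $\{\infty,\infty,\infty,\infty,\infty\}$, $\{\infty,\infty,\infty,\infty,0\}$, $\{\infty,\infty,\infty,0,0\}$ respectively.

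Next I would show that the parameterized state covers the remaining DC classes $\mathd_{3,1,1}, \mathd_{2,2,1}, \mathd_{2,1,1,1}$. Computing and factorizing the Majorana polynomial yields
\begin{equation*}
  \psi_t(z) \propto (z-1)^2 (tz - 1)(z - \omega)(z - \omega^2), \qquad \omega = \E^{2 \pi \I/3},
\end{equation*}
so the MP multiset is $\{1, 1, \omega, \omega^2, 1/t\}$: a double MP at $z = 1$, two singletons at $\omega, \omega^2$ on the equator, and one remaining singleton at $1/t$. The special values $t = 1$ (triple MP at $z = 1$) and $t = \omega, \omega^2$ (double MP at $z = \omega^2, \omega$) produce $\mathd_{3,1,1}$ and $\mathd_{2,2,1}$, while generic $t$ yields $\mathd_{2,1,1,1}$. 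For existence, given an arbitrary state in one of these classes, I would use triple-transitivity to map its most degenerate MP to $z = 1$ and two further distinct MPs to $\omega, \omega^2$; the remaining MP then lands at some $1/t$. Applying combinations of the LOCC rotations $R_x(\pi)$ and $R_z(2\pi/3)$, which preserve the equatorial triangle setwise, brings $t$ into the specified range.

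For uniqueness, suppose $\psi_t$ and $\psi_{t'}$ drawn from the given parameter range are SLOCC-equivalent. By the M\"{o}bius characterization of symmetric SLOCC, there is a M\"{o}bius map $f$ between their MP multisets. Since the most degenerate MP is uniquely distinguished in each, $f(1) = 1$, and cross-ratio invariance of the ordered quadruple $(1, \omega, \omega^2, 1/t)$ under $f$ yields an algebraic relation linking $t$ and $t'$ that, combined with the parameter restriction, forces $t = t'$.

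The principal obstacle is completing this uniqueness analysis rigorously. I must enumerate the finite group of M\"{o}bius self-symmetries of the canonical configuration that preserve the normal form (doubled point at $z = 1$, two singletons at $\omega, \omega^2$), compute its action on the parameter $t$, and verify that the prescribed subset of $(\theta, \varphi)$ selects exactly one orbit representative. The bookkeeping is more delicate than in the four-qubit case both because of the extra MP and because $1/t$ may coincide with $1$, $\omega$, or $\omega^2$; special care is needed on the equator, where the $t \leftrightarrow 1/t$ symmetry induced by $R_x(\pi)$ acts nontrivially and forces the $\varphi$-range to be halved.
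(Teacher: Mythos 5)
Your overall strategy is the same as the paper's: the $d\leq 3$ classes $\mathd_5,\mathd_{4,1},\mathd_{3,2}$ are single SLOCC classes by triple transitivity, the parameterized state has Majorana roots consisting of a doubled point at $z=1$, singletons at $\omega,\omega^2$ and a variable root at $1/t$, and existence plus a cross-ratio check give the result. Your explicit factorization $\psi_t(z)\propto(z-1)^2(tz-1)(z-\omega)(z-\omega^2)$ is correct and is a useful supplement to the paper, which merely asserts that $\mathd_{3,1,1}$ and $\mathd_{2,2,1}$ are subsumed in the $\mathd_{2,1,1,1}$ parameter range. However, two points need repair. First, $\text{R}_{\text{z}}(2\pi/3)$ is not available here: it preserves the equatorial triangle setwise but cyclically permutes its vertices, so it moves the doubled MP away from $z=1$ and destroys the normal form. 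The only nontrivial M\"{o}bius self-map of the fixed configuration $\{1,1,\omega,\omega^2\}$ is $z\mapsto 1/z$, i.e.\ $\text{R}_{\text{x}}(\pi)$, acting as $t\mapsto 1/t$, i.e.\ $(\theta,\varphi)\mapsto(\pi-\theta,\,2\pi-\varphi)$; this alone suffices to reach the stated range (and is exactly what the paper uses), so $\text{R}_{\text{z}}(2\pi/3)$ should be dropped.

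Second, and more substantively, your uniqueness plan --- enumerate the finite group of self-symmetries of the canonical configuration and verify that the parameter range is a fundamental domain for its action on $t$ --- only accounts for equivalences $f$ that map the fixed quadruple to itself and hence the variable root to the variable root. A M\"{o}bius map between the multisets $\{1,1,\omega,\omega^2,1/t\}$ and $\{1,1,\omega,\omega^2,1/t'\}$ must indeed fix the doubled point in the generic $\mathd_{2,1,1,1}$ case (for $\mathd_{2,2,1}$ it may swap the two doubled points, but that class occurs only once in the range, at $t=\omega$, so no issue arises there), yet it may send $1/t$ to $\omega$ or $\omega^2$ and one of the fixed singletons to $1/t'$. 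These exceptional bijections do not form a group acting on $t$ and are not excluded by your orbit argument; each of the six bijections of $\{\omega,\omega^2,1/t\}$ onto $\{\omega,\omega^2,1/t'\}$ yields its own cross-ratio constraint, and every one of them must be shown to force $t=t'$ within the stated range. This is what the paper's phrase ``considering all possible cross-ratios'' refers to, mirroring the $4!$ projections in the proof of \Theo{theorem4qubit}. With that enumeration added, your argument closes.
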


\begin{proof}
  The proof runs analogous to the one of \Theo{theorem4qubit}, with
  the observation that the representative states of the
  $\mathd_{3,1,1}$ and $\mathd_{2,2,1}$ class are readily subsumed in
  the parameter range of $\mathd_{2,1,1,1}$.  The fixed MPs of
  $\mathd_{2,1,1,1}$ are left invariant under a $\text{R}_{\text{x}} (
  \pi )$-rotation, thus ensuring that the remaining MP can be
  projected into the desired parameter range. The uniqueness is again
  verified by considering all possible cross-ratios.
\end{proof}

An over-complete set of representative states for the general case can
then be given as follows:

\begin{corollary}\label{theorem5qubit}
  Every symmetric state of 5 qubits is SLOCC-equivalent to one or more
  state of the set
  \vspace{2mm} \\
  $\{ \sym{0}, \: \: \sym{1}, \: \: \sym{2}, \: \: \sqrt{10} \left(
    \sym{0} + t_1 t_2 \sym{5} \right) + t_1 t_2 \sym{2} + \sym{3} +
  \sqrt{2} \left( t_1 + t_2 \right) \left( \sym{1} + \sym{4} \right)
  \} \, , \text{ with } t_i = \E^{\I \varphi_i} \tan
  \frac{\theta_i}{2} \, , \text{ and}$
  \vspace{2mm} \\
  $( \theta_1 , \varphi_1 ) \in \bm{ \{ } [0, \frac{\pi}{2}] \times
  [0, 2 \pi) \bm{ \} } \cup \bm{ \{ } \{ \frac{\pi}{2} \} \times (0 ,
  \pi ] \bm{ \} }$ , \vspace{1mm} \\ $( \theta_2 , \varphi_2 ) \in
  \bm{ \{ } [0, \pi] \times [0, \frac{2 \pi}{3}) \bm{ \} }$ .
\end{corollary}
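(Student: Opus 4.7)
The plan is to extend the argument of \Theo{theorem5qubitdeg} by trading uniqueness of the representative for an overcomplete parameterization. The diversity-$1$ and diversity-$2$ classes ($\mathd_{5}$, $\mathd_{4,1}$, $\mathd_{3,2}$) each consist of a single SLOCC class (cf.\ \Fig{slocc5}), handled by the three Dicke representatives $\sym{0}, \sym{1}, \sym{2}$. For any state of diversity degree $\geq 3$, the sharp triple-transitivity of the M\"obius group, combined with the identification of symmetric SLOCC operations with M\"obius transformations, allows me to send three chosen distinct MPs to the cube roots of unity; the two remaining MPs then define two parameters $t_1, t_2 \in \cext$.

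To produce the explicit state form, I would take the Majorana polynomial
\begin{equation*}
  \psi(z) \propto (z^3 - 1)\bigl(t_1 t_2 z^2 - (t_1+t_2) z + 1\bigr),
\end{equation*}
whose roots are the three cube roots of unity together with $1/t_1$ and $1/t_2$, and match its expansion against \Eq{majpoly}. A direct identification of coefficients, with overall factor $\sqrt{10}$, reproduces the Dicke-basis expression displayed in the corollary.

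The parameter ranges for $(\theta_1,\varphi_1)$ and $(\theta_2,\varphi_2)$ arise from the residual rotation group of the equatorial triangle, namely the dihedral group generated by an $R_z(2\pi/3)$ and an $R_x(\pi)$ rotation, acting simultaneously on $t_1$ and $t_2$. Using $R_x(\pi)$ I would first fold $t_1$ into the closed upper hemisphere and further restrict it on the equator to the half-circle prescribed by the statement. Using $R_z(2\pi/3)$ I would then fold $\varphi_2$ into $[0,2\pi/3)$, leaving $\theta_2$ free on $[0,\pi]$. Together with the freedom to relabel the two indistinguishable "free" MPs (which swaps $t_1$ and $t_2$) and the freedom in choosing which three of the distinct MPs were sent to the triangle in the first place, this parameterization is overcomplete, which is the origin of the statement's "one or more".

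The main obstacle is the bookkeeping in this last step: verifying that the declared product of one-parameter domains really does surject onto the SLOCC orbits. In contrast to \Theo{theorem4qubit} and \Theo{theorem5qubitdeg}, no cross-ratio calculation is required here, because uniqueness is abandoned; it is enough to check orbit-by-orbit reachability. The only delicate points are the equatorial boundary, the coincidences of $t_i$ with cube roots of unity (yielding double MPs at triangle vertices), and the coincidence $t_1 = t_2$ (which lowers the diversity degree), each handled by short case inspection.
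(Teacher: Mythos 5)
Your proposal is correct and follows essentially the same route as the paper's own proof: reduce to the generic class, use the sharp triple transitivity of the M\"obius group to pin three distinct MPs to an equatorial equilateral triangle, and confine the remaining two MPs to the stated areas by means of the residual $\text{R}_{\text{x}}(\pi)$ and $\text{R}_{\text{z}}(\frac{2\pi}{3})$ rotations, with uniqueness (and hence any cross-ratio argument) deliberately abandoned. Your explicit expansion of the factored Majorana polynomial, which the paper leaves implicit, correctly reproduces the displayed Dicke-basis state and is a welcome addition.
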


\begin{proof}
  Only the generic class $\mathd_{1,1,1,1,1}$ needs to be considered.
  Given an arbitrary state of this class, three of its MPs can be
  projected onto the corners of an equilateral triangle by means of a
  M\"{o}bius transformation. These MPs are left invariant under $\{
  \text{R}_{\text{x}} ( \pi ) , \text{R}_{\text{z}} ( \frac{2 \pi}{3}
  ) \}$-rotations. If the fourth MP does not lie in the $( \theta_1 ,
  \varphi_1 )$-area, it can be projected there by a
  $\text{R}_{\text{x}} ( \pi )$-rotation. Subsequent
  $\text{R}_{\text{z}} ( \frac{2 \pi}{3} )$-rotations can project the
  fifth MP into the $( \theta_2 , \varphi_2 )$-area, while leaving the
  fourth MP in the $( \theta_1 , \varphi_1 )$-area.
\end{proof}

As the number of qubits increases, the picture gradually becomes more
complicated, because DC classes with diversity degree $n$ contain a
continuous range of SLOCC classes that is parameterized by $n-3$
variables \cite{Bastin09}.

\section{Applications and Connections}\label{applications}

\subsection{Four qubit entanglement families}
\label{families}

To describe the behavior of 4 qubit states under SLOCC operations, the
concept of entanglement families (EF) was introduced in
\cite{Verstraete02}.  Nine different EFs were identified, and every 4
qubit state is SLOCC-equivalent to one of these families.  Hence,
SLOCC is a refinement of the entanglement families: SLOCC $<$ EF.

It will now be determined in which EFs the symmetric SLOCC and DC
classes are located.  The separable state $\sym{0}$, and therefore the
entire $\mathd_{4}$ class, is present (up to LUs) in the family
$L_{{abc}_{2}}$, namely, by setting $a=b=c=0$.  The W state $\sym{1}$
is LU-equivalent to the family $L_{{ab}_{3}}$ for $a=b=0$.  The state
$\sym{2}$ can be found in the general family $G_{abcd}$ by setting
$a=1, b=2, c=0, d=-1$.  The continuum of SLOCC classes present in the
generic family $\mathd_{1,1,1,1}$ has previously been parameterized in
\cite{Bastin09} as $( \sym{0} + \sym{3} ) + \mu \sym{2}$, with $\mu
\in \mbbc \backslash \{ \pm \frac{1}{\sqrt{3}} \}$.  These states are
easily recovered from the general family $G_{abcd}$ with $a = 1 +
\frac{\mu}{2}, b = \mu, c = 0, d=1 - \frac{\mu}{2}$.

It is noteworthy that two different DC classes, namely, $\mathd_{2,2}$
and $\mathd_{1,1,1,1}$, belong to the same entanglement family
$G_{abcd}$.  On the other hand, all states of a given DC class belong
to only one EF \footnote{It is not known yet to which EF the DC class
  $\mathd_{2,1,1}$ belongs. However, since $\mathd_{2,1,1}$ consists
  of only one SLOCC class, all of its states must belong to a single
  EF.}.  Thus \Theo{hierarchy} can be stated more precisely for the
four qubit case:

\begin{theorem}\label{hierarchy4qubit}
  The symmetric subspace of the pure 4 qubit Hilbert space has the
  following refinement hierarchy of entanglement partitions:
  \begin{equation}
    \text{LOCC} < \text{SLOCC} < \text{DC} < EF \enspace .
  \end{equation}
\end{theorem}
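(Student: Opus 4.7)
The plan is to assemble the chain LOCC $\le$ SLOCC $\le$ DC $\le$ EF and then exhibit one separating state for each adjacent pair. The first two refinements are handed to us by \Theo{hierarchy}, so the remaining work splits into (i) proving DC $\le$ EF and (ii) producing three strictness witnesses.

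For step (i) I would go through the five symmetric DC classes of \Fig{slocc4} one by one. Four of them, namely $\mathd_{4}$, $\mathd_{3,1}$, $\mathd_{2,2}$ and $\mathd_{2,1,1}$, each coincide with a single SLOCC class; since SLOCC refines EF (as EF was constructed as a coarsening of SLOCC in \cite{Verstraete02}), each of them lies inside a unique EF. The concrete embeddings $\sym{0} \in L_{{abc}_{2}}$, $\sym{1} \in L_{{ab}_{3}}$ and $\sym{2} \in G_{abcd}$ recalled earlier in this section make three of them explicit, while for $\mathd_{2,1,1}$ no explicit identification of the EF is needed. The generic class $\mathd_{1,1,1,1}$ is parameterized by the SLOCC representatives $( \sym{0} + \sym{3} ) + \mu \sym{2}$, all of which already sit inside $G_{abcd}$ via $a = 1 + \mu/2$, $b = \mu$, $c = 0$, $d = 1 - \mu/2$. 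Hence every DC class is contained in a single EF.

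For step (ii) the strictness SLOCC $<$ DC is witnessed by the one-parameter continuum of SLOCC classes inside $\mathd_{1,1,1,1}$ established in \Sect{representative}, and DC $<$ EF is witnessed by the fact that $\mathd_{2,2}$ and $\mathd_{1,1,1,1}$ both embed in $G_{abcd}$. To separate LOCC from SLOCC I would take $\sym{2}$, whose MPs sit at the two poles with pair separation $\pi$, and its image under the M\"obius transformation $z \mapsto z - 1$, whose Majorana roots collapse to $\{-1, -1, \infty, \infty\}$ with pair separation $\pi/2$. Both states lie in $\mathd_{2,2}$ and are therefore SLOCC-equivalent, but because LOCC acts as a rigid rotation of the Majorana sphere that preserves all inter-MP angles they cannot be LOCC-equivalent.

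The one place where a routine calculation does not suffice is the $\mathd_{2,1,1}$ class, whose EF is not pinned down in the surrounding text. The clean resolution here is structural rather than computational: $\mathd_{2,1,1}$ is a single SLOCC class, and any SLOCC class sits inside exactly one EF, so the desired containment follows without identifying the family explicitly. Once this remark is in place, the whole theorem reduces to reading off the parameterizations and embeddings already collected in the preceding paragraphs.
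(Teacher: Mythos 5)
Your proposal is correct and follows essentially the same route as the paper: the containment DC $\le$ EF is obtained by the same case-by-case embedding of the five symmetric DC classes into the families of \cite{Verstraete02} (including the same structural argument for $\mathd_{2,1,1}$ that the paper relegates to a footnote), and the strictness witnesses for SLOCC $<$ DC (the continuum inside $\mathd_{1,1,1,1}$) and for DC $<$ EF ($\mathd_{2,2}$ and $\mathd_{1,1,1,1}$ both landing in $G_{abcd}$) are exactly the paper's. Your only addition is the explicit M\"{o}bius-transformation witness for LOCC $<$ SLOCC via $z \mapsto z-1$ acting on $\sym{2}$, which the paper leaves implicit; that witness is valid, since the angular separation of the two degenerate MP pairs changes from $\pi$ to $\pi/2$ and rotations of the Majorana sphere preserve such angles.
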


\subsection{Determination of SLOCC inequivalence from the MP
  distribution}
\label{determiningslocc}

The known properties of M\"{o}bius transformations can be utilized to
determine from the MP distributions whether symmetric states with the
same degeneracy configuration could be SLOCC-equivalent. For example,
circles on the surface of the Majorana sphere are always projected
onto circles \cite{Knopp}, and this trait can be exploited by looking
for circles with a certain number of MPs.

\begin{figure}[ht]
  \includegraphics{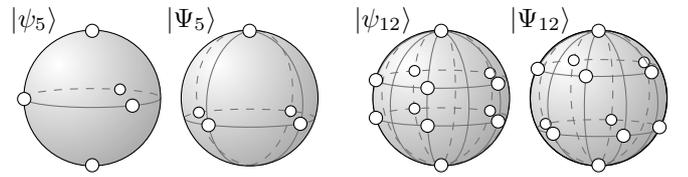}
  \caption{\label{maxentstates} Four highly or maximally entangled
    symmetric states which were introduced in
    \protect\cite{Aulbach10}.  The 5 qubit ``trigonal bipyramid''
    state $\ket{\psi_{5}}$ is SLOCC-inequivalent to the maximally
    entangled symmetric state of 5 qubits $\ket{\Psi_{5}}$. Likewise,
    the 12 qubit ``icosahedron'' state $\ket{\Psi_{12}}$ cannot be
    reached from $\ket{\psi_{12}}$ by SLOCC operations.}
\end{figure}

As an example, the two 5 qubit states shown in \Fig{maxentstates} are
not SLOCC-equivalent, because $\ket{\Psi_5}$ exhibits a ring with 4
MPs, while such a ring is not present in $\ket{\psi_5}$.  Similarly
one can show that for the maximally entangled symmetric states (in
terms of the geometric measure) of 10 and 11 qubits, as discussed in
\cite{Aulbach10}, the presumed solutions for the general case are not
SLOCC-equivalent to those for the subset of states with positive
coefficients.  For 12 qubits it is not as obvious that the general and
positive solutions, shown in \Fig{maxentstates}, are
SLOCC-inequivalent, since both states have several rings with 4 or 5
MPs each.  For the highly symmetric icosahedron state
$\ket{\Psi_{12}}$ it is possible to identify twenty different circles,
each through three adjacent MPs (the corners of all faces of the
icosahedron), so that the interior of each circle contains no
MPs. This property must be preserved under M\"{o}bius transformations,
but for $\ket{\psi_{12}}$ it is not possible to find such twenty
distinct circles that are all free of other MPs in their interior.

\section{Conclusion}\label{conclusion}

In this paper the three entanglement classification schemes LOCC,
SLOCC and Degeneracy Configuration were employed to characterize and
explore symmetric multiqubit states. It was found that the M\"{o}bius
transformations from complex analysis do not only allow for a simple
and complete description of the freedoms present in SLOCC operations,
but also provide a straightforward visualization of these freedoms by
means of the Majorana sphere.  In particular, it would be promising to
study how the entanglement and interconversion probabilities changes
under the action of the M\"{o}bius transformations $\widetilde{f}$
which translate the Majorana sphere in $\mbbrr$.  For example, in
\Fig{translations} the maximally entangled symmetric 5 qubit state in
terms of the geometric measure is displayed at $\mathm_{1}$, and any
translation of the sphere decreases the entanglement of the underlying
state.

The symmetric SLOCC classes of up to 5 qubits were fully characterized
by representative states whose MP distributions are of a particularly
simple form, or can be easily parameterized by well-defined areas on
the sphere for the variable MPs.  For 4 qubits the concept of
entanglement families was fitted into the hierarchy of symmetric
entanglement classification schemes, and it was demonstrated how the
existing theory of M\"{o}bius transformations can prove helpful to
easily determine whether two symmetric states are SLOCC-equivalent or
not.

\begin{acknowledgments}
  The author thanks D~Markham, J~Biamonte, V~Vedral, J~Dunningham and
  M~Williamson for very helpful discussions.  This work is supported
  by the National Research Foundation \& Ministry of Education,
  Singapore.

  \emph{Note added.} During the completion of this paper I became
  aware of a similar work which also points out the relationship
  between symmetric SLOCC operations and M\"{o}bius transformations
  \cite{Ribeiro11}.
\end{acknowledgments}

\end{document}